\newtheorem{Th}{Theorem}[section]
\newtheorem{Lemme}[Th]{Lemma}
\newtheorem{Prop}[Th]{Proposition}
\newcommand{\F}{\mathcal{F}}
\newcommand{\R}{\mathbb{R}}
\newcommand{\Z}{\mathbb{Z}}
\newcommand{\N}{\mathbb{N}}
\newcommand{\pth}[1]{\left(#1\right)}
\DeclareMathOperator{\conv}{\mathsf{conv}}
\DeclareMathOperator{\rad}{r}
\DeclareMathOperator{\ch}{ch}
\DeclareMathOperator{\cl}{cl}
\DeclareMathOperator{\h}{h}
\DeclareMathOperator{\HC}{HC}
\title{A fractional Helly theorem for set systems with slowly growing homological shatter function}
\author{Marguerite Bin\footnote{Universit\'e de Lorraine, CNRS, INRIA, LORIA,  F-54000 Nancy, France. {\tt marguerite.bin@loria.fr}}}
\date{\today}
\begin{document}
\maketitle

\begin{abstract}
    We study parameters of the convexity spaces associated with families of sets in $\R^d$ where every intersection between $t$ sets of the family has its Betti numbers bounded from above by a function of $t$. Although the Radon number of such families may not be bounded, we show that these families satisfy a fractional Helly theorem. To achieve this, we introduce graded analogues of the Radon and Helly numbers. This generalizes previously known fractional Helly theorems.
    
    \medskip
    \noindent\textbf{Keywords:}
topological combinatorics, fractional Helly theorem, Radon number, homological VC dimension.
\end{abstract}
\section{Introduction}
Intersection patterns of convex sets of $\mathbb{R}^d$ enjoy many remarkable properties: for example, Helly's theorem \cite{radon1921mengen},\cite[$\mathsection 1.3$]{matousek2013lectures} states that the intersection of all sets of a finite family is nonempty if and only if every $d+1$ members of the family intersect. 
These properties are not specific to convex sets:  
lattice convex sets \cite{Doignon1973ConvexityIC}, 
or even good covers \cite{helly1930systeme} satisfy similar properties. The notion of convexity spaces was introduced in order to study these properties in greater generality, notably by recasting the convex hull operator. \medskip

Formally, a convexity space on a ground set $X$ is a family $\mathcal{C}$ of subsets of $X$ containing $\emptyset$ and $X$, closed under intersections and nested unions of chains. This framework allows us to define the convex hull of a set $P\subset X$, denoted $\conv_{\F}(P)$, as the intersection between all sets of $\mathcal{C}$ containing $P$. (Since $\mathcal{C}$ is closed by intersection, it is the smallest set of $\mathcal{C}$ containing $P$.) This allows us to define three parameters in particular.

\begin{itemize}
    \item[--] The \textit{Radon number} $\rad(\mathcal{F})$ of a convexity space $\mathcal{F}$, denoted $\rad(\mathcal{F})$, is the smallest integer $r$ such that every set $S\subset X$ of cardinality $r$ can be split into two nonempty disjoint parts $S=P_1\sqcup P_2$ satisfying $\conv_{\mathcal{F}}(P_1) \cap \conv_{\mathcal{F}}(P_2) \neq \emptyset$. We say such a partition is an $\F$-Radon partition of the point set $P$.
    \item[--] The \textit{Helly number} of a convexity space $\mathcal{F}$, denoted $\h(\mathcal{F})$, is the smallest integer $h$ with the following property: if in a finite subfamily $S\subset\mathcal{F}$ every $h$ members of $S$ have a point in common, then all members of $S$ have a point in common. If no such $h$ exists, we put $\h(\mathcal{F})=\infty$.
    \item[--] The \textit{colorful Helly number} of a convexity space $\F$, denoted $\ch(\F)$, 
    is the minimal number of colors $m$ such that for every coloring of a subfamily $\F'\subset \F$ with $m$ colors, if every colorful subfamily (one of each color) has nonempty intersection, then there is a color such that all elements of this color have nonempty intersection.
\end{itemize}
Here we also introduce, as suggested by \cite{Holmsen2021}:
\begin{itemize}
    \item[--] The \textit{$k$-th clique number} of a convexity space $\F$, denoted $\cl_k(\F)$ is the smallest integer $s$ such that for every finite subfamily $\F'\subset \F$, whenever a constant fraction of the $s$-tuples of $\F'$ intersect, some constant fraction $\mathcal{G}$ of $\F'$ forms a clique in the sense that every $k$ elements of $\mathcal{G}$ intersect. 
\end{itemize}
Several relations hold between such parameters for every convexity space. For example, Levi's inequality \cite{Levi_radon_implies_helly} states that $\h(\mathcal{F})\leq \rad(\mathcal{F})-1$. (We mention that the opposite relation is untrue, as later described in the proof of Lemma~\ref{ex:growing-radon-number}, adapted from \cite[Example 1]{Kay1971AxiomaticCT}.) 
The clique number can be bounded from above by the colorful Helly number~\cite{Holmsen2020}, which in turn is bounded from above by a function of the Radon number~\cite{Holmsen2021}.
\medskip

When the ground set is a topological space, yet another parameter emerges: the homological complexity $\HC_h(\F)$, defined as the maximum among the $h$ first Betti numbers of the members of the convexity space. When the ground set is $\R^d$, Goaoc and al. \cite{hellyBetti} first proved that a function of the homological complexity bounds from above the Helly number. Patáková \cite{patakova2022bounding} later improved the result to prove that it also bounds from above the Radon number. Combined with the previous work of Holmsen and Lee \cite{Holmsen2021}, it induces that a finite homological complexity entails a finite clique number, which was later improved in \cite{steppingUp} and generalized to topological set systems with one forbidden homological minor.

Making a sidestep, we can introduce a new parameter called the ($h$th) \textit{homological shatter function}, as suggested (implicitly) by Kalai and Meshulam in \cite{Kalai2010CombinatorialEF}:
\[\phi_{\mathcal{F}}^{(h)}\colon\left\{\begin{array}{ccc}
        \mathbb{N} & \rightarrow & \mathbb{N} \cup \{\infty\} \\
         k& \mapsto & \sup \left\{\tilde{\beta_i}(\bigcap\limits_{F\in \mathcal{G}} F,\Z_2)\mid \mathcal{G}\subset \mathcal{F}, |\mathcal{G}|\leq k, 0\leq i\leq h\right\}
    \end{array}\right..\]
with $\tilde\beta_i(X,\Z_2)$ denoting the $i$-th reduced Betti number of the topological space $X$. 
It describes the topological complexity of intersections between increasingly more sets. It is kind of a graded analogue of the homological complexity, and it encourages us to define graded analogues of other parameters. Note that this parameter is constant equal to the homological complexity when $\mathcal{F}$ is a convexity space. For this parameter to be relevant, we therefore need to work with families that are not convexity spaces\footnote{One could also stratify the sets of a convexity space $\mathcal{F}$ and define graded parameters for each stratum, or work with families of convexity spaces (and consider multiple convexity spaces for each layer).}: thankfully, the definitions of $\mathcal{F}$-hull, Radon number and Helly number remain reasonable when we switch from $\mathcal{F}$ being a convexity space to $\mathcal{F}$ being a family of sets. We refer the reader to \cite[$\mathsection 5.3$]{patakova2022bounding} for an exposition of the nuances between convexity spaces and general set systems. \medskip

In this note we make the following contributions:
\begin{itemize}
    \item[--] We introduce graded analogues of the Radon number, Helly number, and colorful Helly number. We relate these graded parameters to each other and to some of their ungraded analogues. 
    \item[--] We relate these graded parameters to the homological shatter function to refine the results of \cite{patakova2022bounding, steppingUp} mentioned previously. When the homological complexity is finite, the homological shatter function is stationary. We show that even when the homological shatter function is non-stationary, we can still guarantee a finite clique number as long as we can bound from above one specific value. This is a step toward a conjecture of Kalai and Meshulam \cite[Conjectures~6~and~7]{Kalai2010CombinatorialEF}, which suggests that a family with a homological shatter function that grows polynomially has a finite clique number. 
\end{itemize}

Some ad-hoc examples of topological set systems where our Theorem~\ref{th:main-theorem} applies but \cite[Corollary~1.3]{steppingUp} does not are later discussed Section~\ref{section:examples-betti-borne}.

\section{Graded parameters}

The homological shatter function can equivalently be defined as 
\[\phi_\mathcal{F}^{(h)}(t)= \sup_{\substack{\mathcal{F}'\subset\mathcal{F}\\ |\mathcal{F}'|=t} }\HC_{h}(\mathcal{F}'^{\cap}) \qquad \text{where} \qquad \F'^{\cap} \coloneqq \left\{\bigcap\limits_{F\in\mathcal{G}} F \mid \mathcal{G}\subset \F'\right\}.\]
A natural way to adapt the statements proved in \cite{patakova2022bounding} and \cite{Holmsen2021} and to relate the homological shatter function to other parameters is to introduce graded analogues of other parameters. We can similarly define the $t$-th graded Radon, Helly and colorful Helly numbers of the family $\F$ as:
 \[\rad^{(t)}(\mathcal{F})\coloneqq \sup \limits_{\substack{\mathcal{F}'\subset \mathcal{F} \\ |\mathcal{F}'|\leq t}} \rad (\mathcal{F}'),
 \quad
 \h^{(t)}(\mathcal{F})\coloneqq \sup\limits_{\substack{\mathcal{F}'\subset \mathcal{F} \\ |\mathcal{F}'|\leq t}} \h(\mathcal{F}'), 
\quad
 \ch^{(t)}(\F)\coloneqq\sup\limits_{\substack{\mathcal{F}'\subset \mathcal{F} \\ |\mathcal{F}'|\leq t}} \ch(\mathcal{F}').\]
These definitions lead us to consider the Radon numbers for finite set systems. 
We point out that considering two elements of $X$ as equivalent when they belong exactly to the same sets $F\in\mathcal{F}$ does not change the Radon and Helly numbers. In particular, we get that  $\rad^{(t)}(\mathcal{F})< 2^t$ from the pigeonhole principle; we sharpen this below (see Proposition~\ref{prop:graded-radon-bounded}).  \medskip

These graded numbers follow the same relations as their ungraded analogues: we can consider all subfamilies of size $t$ and apply the known relations between the ungraded numbers. For example, we get the following relations for a family $\mathcal{F}$:
\begin{align}
    \h^{(t)}(\mathcal{F})\leq \rad^{(t)}(\mathcal{F})-1 && \text{ if } \bigcap\mathcal{F}=\emptyset 
    \label{eq:levi}\\
    \ch^{(t)}(\F) \leq m\pth{\rad^{(t)}(\F)} && \text{with }m(\cdot) \text{ defined by \cite[Lemma 2.3]{Holmsen2021}}
    \label{eq:colorful-helly}
\end{align}
by applying \cite{Levi_radon_implies_helly} for the first relation and \cite[Lemma 2.3]{Holmsen2021} for the second. \medskip

We present two bridges between graded and non-graded parameters.

\paragraph{Growth rate of the graded numbers.} The first bridge is straightforward and relates the growth of the graded Helly numbers to the Helly number: graded Helly numbers must either grow fast or be stationary.
\begin{Prop}
    \label{prop:growing-helly-implies-helly-bounded}
    If $\h^{(t)}(\mathcal{F})< t$ for all $t>t_0$, then $\h({\mathcal{F}})\leq t_0$.
\end{Prop}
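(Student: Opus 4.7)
My plan is a straightforward induction on the size of a finite subfamily $S \subset \mathcal{F}$. First I would rewrite the hypothesis as $\h^{(t)}(\mathcal{F}) \leq t-1$ for every $t > t_0$, which in particular means that for any $S \subset \mathcal{F}$ with $|S| = n > t_0$ we have $\h(S) \leq n-1$: if every $n-1$ members of $S$ share a point, so do all of $S$.

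To prove $\h(\mathcal{F}) \leq t_0$, I need to show that whenever $S \subset \mathcal{F}$ is finite and every $t_0$ members of $S$ intersect, $\bigcap S \neq \emptyset$. I proceed by induction on $|S| = n$. For $n \leq t_0$ the claim is vacuous, and for the base case $n = t_0+1$, the inequality $\h(S) \leq \h^{(t_0+1)}(\mathcal{F}) \leq t_0$ combined with the definition of the Helly number applied to $S$ directly yields the conclusion.

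For the inductive step with $n \geq t_0+2$, I consider every size-$(n-1)$ subfamily $S' \subset S$; since every $t_0$ members of $S'$ are also $t_0$ members of $S$, they intersect, and the induction hypothesis (applicable since $n-1 \geq t_0+1$) gives $\bigcap S' \neq \emptyset$. Hence every $(n-1)$-fold intersection of members of $S$ is non-empty, which by the bound $\h(S) \leq n-1$ promotes to $\bigcap S \neq \emptyset$, as desired.

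I do not expect any serious obstacle: the key move is simply to feed the graded hypothesis at the correct level of $t$ (namely $t = n$) to get $\h(S) \leq n-1$, then use the inductive hypothesis to upgrade ``$t_0$-wise intersecting'' into ``$(n-1)$-wise intersecting''. The only thing to verify carefully is that the base case at $n = t_0+1$ lines up with the hypothesis, which it does by choosing $t = t_0+1$.
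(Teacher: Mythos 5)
Your proof is correct, and it takes a genuinely more hands-on route than the paper. The paper argues at the level of the sequence $(\h^{(t)}(\mathcal{F}))_t$: it observes that a strict increase $\h^{(t)}(\mathcal{F})>\h^{(t-1)}(\mathcal{F})$ can only happen when $\h^{(t)}(\mathcal{F})=t$ (equivalently, when it is witnessed by a size-$t$ subfamily with empty intersection all of whose proper subfamilies intersect), so the hypothesis forces the sequence to be stationary beyond $t_0$, and then uses $\h^{(t)}(\mathcal{F})\to\h(\mathcal{F})$ to conclude $\h(\mathcal{F})=\h^{(t_0)}(\mathcal{F})\leq t_0$. Your induction on $|S|=n$ unpacks exactly this: feeding the graded hypothesis at level $t=n$ to get $\h(S)\leq n-1$, and using the inductive hypothesis to promote $t_0$-wise intersection to $(n-1)$-wise and then to full intersection, is in effect a proof of the paper's (unproved) jump observation, packaged as a bootstrapping argument. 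What your version buys is self-containedness -- everything follows directly from the definition of the Helly number, with no appeal to the limit $\h^{(t)}(\mathcal{F})\to\h(\mathcal{F})$ or to the structural claim about jumps; what the paper's version buys is brevity and the reusable characterization of when the graded Helly sequence can increase. One small point to phrase carefully: for $n\leq t_0$ the claim is not ``vacuous'' but rather trivial under the standard reading of ``every $h$ members'' as ``every at most $h$ members'' (so that the hypothesis already includes $\bigcap S\neq\emptyset$ when $|S|\leq t_0$); this convention issue is inherited from the definition of the Helly number itself and does not affect the substance of your argument, whose real content starts at $n=t_0+1$.
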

\begin{proof}
    Notice that we have $\h^{(t)}(\F) > \h^{(t-1)}(\F)$ if and only if $\h^{(t)}(\mathcal{F})=t$. 
    Thus, the assumption implies that the sequence $\h^{(t)}(\mathcal{F})$ is stationary starting from $t_0$. Since $\h^{(t)}(\mathcal{F})\xrightarrow[t\rightarrow\infty]{} \h(\mathcal{F})$, it follows that $\h(\mathcal{F})=\h^{(t_0)}(\mathcal{F})\leq t_0$.
\end{proof}
    In fact, the aforementioned condition characterizes all sequences that form graded Helly numbers. Every non-decreasing sequence $(u_t)\in\mathbb{N}^{\mathbb{N}}$ satisfying $u_t\leq t$ and $u_{t}>u_{t-1} \iff u_t=t$ corresponds to the graded Helly numbers of the family $\F=\left\{F_k^{(i)}=\{1,\dots,u_i\}\backslash \{k\} \;\mid\;i\in\mathbb{N},k\in\{1,\dots,u_i\}\right\}$ on the ground set $\N$. 

Note that establishing such a bridge between the graded Radon numbers and the Radon number seems challenging: the graded Radon numbers can grow slowly. We will see in Lemma~\ref{ex:growing-radon-number} a set system $\F$ where $\rad^{(t)}(\F)$ grows logarithmically with $t$. For now, we can only bound its growth from above.

\begin{Prop}
\label{prop:graded-radon-bounded}
 For any set system $\F$ and any $t \in \N$, we have $\rad^{(t)}(\mathcal{F})\leq t+1$.
\end{Prop}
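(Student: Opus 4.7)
The plan is to prove the equivalent statement $\rad(\F') \leq |\F'|+1$ for every finite subfamily $\F' \subseteq \F$, say of size $t \geq 1$. Given $t+1$ points $p_1,\ldots,p_{t+1}$, I attach to each point its \emph{signature} $S_i \coloneqq \{F \in \F' : p_i \in F\} \subseteq \F'$, so that $\conv_{\F'}(P) = \bigcap\{F \in \F' : F \in \bigcap_{p_i \in P} S_i\}$ for any $P \subseteq \{p_1,\ldots,p_{t+1}\}$. The key step will be to establish the following combinatorial claim: among the $t+1$ subsets $S_1,\ldots,S_{t+1}$ of the $t$-element set $\F'$, some $S_i$ is \emph{redundant}, in the sense that $\bigcap_{j \neq i} S_j \subseteq S_i$. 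Granting this, the partition $(\{p_i\},\{p_j : j \neq i\})$ is an $\F'$-Radon partition with $p_i$ as common witness: the point $p_i$ lies in $\conv_{\F'}(\{p_i\})$ trivially, and any $F \in \F'$ containing the other part lies in $\bigcap_{j \neq i} S_j \subseteq S_i$, hence contains $p_i$.

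To prove the claim, I would argue by contradiction: suppose every $S_i$ is essential, and pick for each $i$ an element $e_i \in \bigcap_{j \neq i} S_j \setminus S_i$. The assignment $i \mapsto e_i$ maps the $(t+1)$-element set $\{1,\ldots,t+1\}$ into the $t$-element set $\F'$, so by pigeonhole $e_i = e_k$ for some $i \neq k$. But then $e_i \in S_k$ (by the defining property of $e_i$, since $k \neq i$) while $e_k \notin S_k$ (by the defining property of $e_k$), contradicting $e_i = e_k$.

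The main insight is identifying the right combinatorial statement. A naive stronger version of the claim would ask for a pair of comparable signatures $S_j \subseteq S_i$, which would do the job via the same deduction, but this is too much to demand: the Boolean lattice on $t$ elements admits antichains of size $\binom{t}{\lfloor t/2 \rfloor}$, which exceeds $t+1$ once $t \geq 4$. Asking instead that some $S_i$ contain the intersection of all the others is a strictly weaker condition, and the short pigeonhole above shows it always holds for $t+1$ arbitrary (not necessarily distinct) subsets of a $t$-set — this simultaneously handles the case of duplicated signatures, which is where the crude pigeonhole bound $\rad^{(t)}(\F) < 2^t$ stops.
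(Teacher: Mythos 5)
Your proof is correct and is essentially the paper's own argument in a slightly different packaging: your witnesses $e_i \in \bigcap_{j\neq i} S_j \setminus S_i$ are exactly the sets the paper extracts from the failing singleton partitions (a set containing every point except $p_i$), and your pigeonhole on $t+1$ such witnesses is the contrapositive of the paper's observation that these witnesses are pairwise distinct, hence exhaust $\F'$ and force some singleton-versus-rest partition to be an $\F'$-Radon partition. Both arguments also rely on the same convention that the hull of a set contained in no member of $\F'$ is the whole ground set, so no gap there.
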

\begin{proof}
Suppose there exists $\mathcal{F}' \subseteq \F$ of size $t$ of Radon number greater than $t+1$, that is with some set $S=\{p_1,\dots,p_{t+1}\}$ of $t+1$ points in $X$ with no $\F'$-Radon partition. 
If a partition $\mathcal{P}=(\mathcal{P}_0,\mathcal{P}_1)$ of $\{p_1,\dots,p_{t+1}\}$ is not a Radon partition, then there exists some $F_i\in\mathcal{F}'$ such that all members of $\mathcal{P}_0$ belong to $F_i$ and at least one member of $\mathcal{P}_1$ does not belong to $F_i$ (the $\F'$-hull of $\mathcal{P}_0$ does not contain any member of $\mathcal{P}_1$). Considering the 2-partitions $\mathcal{P}^{(j)}=(\{p_1,\dots,p_{t+1}\}\backslash \{p_j\},\{p_j\})$ for $j\in\{1,\dots,t\}$ yields $t$ sets $F_{i_j}\in\mathcal{F}'$ such that every $p_k$ belongs to $F_{i_j}$ except for $p_j$ who does not. This extra information implies that we get $t$ distinct indices $i_j$, that is $\{F_{i_1}, F_{i_2}, \ldots, F_{i_t}\} = \F'$. 
Hence, no $F \in\mathcal{F}'$ contains all points $p_1,\dots,p_t$: we get $\conv_{\mathcal{F}'}(\{p_1,\dots,p_t\})=X$ hence $(\{p_1,\dots,p_t\},\{p_{t+1}\})$ is a Radon partition, which contradicts the initial assumption.
\end{proof}

\paragraph{Toward a fractional Helly theorem.}
Holmsen \cite[Theorem~1.2]{Holmsen2020} proved that the colorful Helly number $\ch(\F)$ bounds the clique number from above when $\h(\F)\leq \ch(\F)$. 
Indeed, for every integers $k\leq m$, having $\ch(\F)\leq m$ and $\h(\F)\leq k$ means that some specific patterns (of arbitrary large size) are forbidden in the hypergraph of $k$-intersections. 
In particular, some patterns of fixed size $mk$ (called the complete $m$-tuples of missing edges, see the discussion in \cite[$\mathsection 3$]{Holmsen2021}) are forbidden in this hypergraph, which ensures $\cl_k(\F)\leq m$ from \cite[Theorem~1.2]{Holmsen2020}. 
However, complete $m$-tuples of missing edges are forbidden as soon as $\ch^{(mk)}(\F)\leq m$ and $\h^{(m)}(\F)\leq k$, implying $\cl_m(\F)\leq m$. More precisely: 
\medskip
\begin{Lemme}[{\cite[Theorem~1.2]{Holmsen2020}}]
    \label{lemma:graded-colorful-implies-clique}
    For $m\geq k>1$ and $\alpha\in(0,1)$, there exists $\beta(\alpha,k,m)\in(0,1)$ with the following property: For any family $\F$ such that $\ch^{(mk)}(\F)\leq m$ and $\h^{(m)}(\F)\leq k$ and for every finite subfamily $\F'\subset \F$, if a fraction $\alpha$ of the $m$-tuples of $\F'$ have nonempty intersection, then there exists a subfamily $\mathcal{G}\subset \F'$ of size $\beta(\alpha,k,m) |\F'|$ such that every $m$ sets of $\mathcal{G}$ intersect.
\end{Lemme}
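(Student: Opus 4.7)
The plan is to reuse Holmsen's proof of \cite[Theorem~1.2]{Holmsen2020} while carefully tracking the size of the subfamilies on which the colorful Helly and Helly hypotheses get invoked. As noted in the paragraph preceding the lemma, the argument is local: it only applies $\ch$ to subfamilies of size at most $mk$ and $\h$ to subfamilies of size at most $m$. So the graded hypotheses of the present statement are exactly the versions of the ungraded ones actually used in \cite{Holmsen2020}, and in principle nothing more should be required.

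Concretely, I would first introduce the $m$-uniform hypergraph $H$ on vertex set $\F'$ whose hyperedges are the $m$-subfamilies with nonempty intersection; the hypothesis reads $e(H) \geq \alpha \binom{|\F'|}{m}$. Then, following the structural step in Holmsen's argument, I would prove that $H$ contains no \emph{complete $m$-tuple of missing edges} in the sense of \cite[$\mathsection 3$]{Holmsen2021}. This is the one place where the convexity-like hypotheses are used: the configuration involves $m$ color classes of $k$ sets each, totalling $mk$ sets, so the colorful Helly step only touches this fixed subfamily (available via $\ch^{(mk)}(\F) \leq m$), and the Helly appeal inside a single color class involves only $m$ members (available via $\h^{(m)}(\F) \leq k$). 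The remainder of the reasoning is purely combinatorial and does not revisit $\F$.

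With this bounded-size pattern forbidden in the dense hypergraph $H$, I would invoke the quantitative Ramsey-type extraction from \cite{Holmsen2020}: from the absence of such a fixed configuration in a dense hypergraph it produces a clique of size $\beta(\alpha,k,m)|\F'|$ in $H$, which is by definition a subfamily $\mathcal{G} \subset \F'$ in which every $m$-tuple has nonempty intersection.

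The main obstacle is bookkeeping in the middle step: one has to walk through Holmsen's proof and confirm that no application of $\ch$ or $\h$ ever reaches beyond subfamilies of size $mk$ or $m$ respectively. Once this locality check is done, the graded adaptation contributes nothing beyond the observation itself, since the combinatorial Ramsey-type backbone that extracts the clique is agnostic to the nature of $\F$ and carries over verbatim.
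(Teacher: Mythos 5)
The overall plan — walk through Holmsen's proof, check that $\ch$ is only ever applied to $mk$ sets and $\h$ to at most $m$ sets, then invoke the Ramsey-type extraction — is exactly what the paper does, so conceptually you are on the right track. However, there is a concrete technical slip in the setup: you build the \emph{$m$-uniform} hypergraph $H$ of $m$-wise intersections and then claim to forbid a complete $m$-tuple of missing edges in $H$ and to apply \cite[Theorem~1.2]{Holmsen2020} to $H$. But that theorem is stated for a \emph{$k$-uniform} hypergraph with $m>k$: the forbidden pattern (a complete $m$-tuple of missing edges, $m$ classes of $k$ vertices each) lives in the $k$-uniform intersection hypergraph $H_k$, and the density hypothesis is about the number of \emph{$m$-cliques} of $H_k$, not about hyperedges of an $m$-uniform hypergraph. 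As written, you would be trying to apply Holmsen's theorem with uniformity parameter $m$ and pattern parameter $m$, which is not permitted ($m>k$ is required) and does not match the pattern you describe.

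This matters for a second reason: what \cite[Theorem~1.2]{Holmsen2020} actually produces is a large clique of $H_k$, i.e.\ a subfamily $\mathcal{G}$ in which every $k$ sets intersect. To get the stated conclusion that every $m$ sets of $\mathcal{G}$ intersect, you still need a final graded Helly step: for any $\mathcal{G}'\subset\mathcal{G}$ with $|\mathcal{G}'|=m$, the $k$-wise intersections together with $\h^{(m)}(\F)\le k$ give $\bigcap\mathcal{G}'\neq\emptyset$. By phrasing everything in terms of the $m$-uniform hypergraph you implicitly assumed this upgrade for free; in the correct framing it is an explicit use of $\h^{(m)}$. So the fix is: work in $H_k$, note that $\alpha\binom{|\F'|}{m}$ intersecting $m$-tuples give that many $m$-cliques of $H_k$, forbid the complete $m$-tuple of missing edges via $\ch^{(mk)}$ and $\h^{(m)}$ exactly as you outlined, extract a $k$-clique $\mathcal{G}$ via Holmsen, and then upgrade to $m$-wise intersection inside $\mathcal{G}$ via $\h^{(m)}(\F)\le k$.
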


The careful reader will note that, unlike in \cite[Theorem~1.1]{Holmsen2021}, our result does not yield a fractional Helly theorem. Recall that the fractional Helly theorem asserts that if a constant fraction of the $m$-tuples intersect, then a constant fraction of the elements of $\F'$ have a nonempty intersection. In our case, to ensure that all the elements of $\mathcal{G}$ intersect, it suffices to add the assumption $\h(\F)\leq m$. We then obtain a result analogous to \cite[Theorem~1.1]{Holmsen2021}.

\section{What about the homological shatter function ?}
In this section we are interested in families whose ground set is a topological space, for example $\R^d$. A parameter emerges, the homological complexity. \cite{hellyBetti} and \cite[Theorem~2.1]{patakova2022bounding} relate this parameter to the Helly and Radon numbers; we can deduce the same relations between their graded analogues by applying their results to each subfamily of size $t$. We get, for $\F$ a family of sets in $\R^d$:
\begin{align}
    \h^{(t)}(\F)< \rad^{(t)}(\mathcal{F})\leq r\left(\phi_{\mathcal{F}}^{(\lceil\frac{d}{2})\rceil}(t),d\right) && \text{ with } r(\cdot,\cdot) \text{ defined by \cite[Theorem~2.1]{patakova2022bounding}.} 
    \label{eq:bound-phi-bound-radon-patakova}
\end{align}
Notice that combining Inequality~\eqref{eq:bound-phi-bound-radon-patakova} with Inequality~\eqref{eq:colorful-helly} allows us to write
\begin{align}
    \ch^{(t)}(\F)\leq m\pth{r\pth{\phi_{\mathcal{F}}^{(\lceil\frac{d}{2})\rceil}(t),d}}.&&
    \label{eq:colorful-helly-bounded-by-phi}
\end{align}

\medskip

The conjecture of Kalai and Meshulam \cite[Conjectures 6 and 7]{Kalai2010CombinatorialEF}, \cite[Conjecture 1.9]{steppingUp} states that when the homological shatter function grows polynomially, the family $\F$ satisfies a fractional Helly theorem. Inequalities \eqref{eq:bound-phi-bound-radon-patakova} and \eqref{eq:colorful-helly-bounded-by-phi}, combined with Lemma~\ref{lemma:graded-colorful-implies-clique} are a step toward this conjecture: controlling the growth of the homological shatter functions allows to control the growth of the functions $\h^{(\cdot)}(\F),\rad^{(\cdot)}(\F), \ch^{(\cdot)}(\F)$, and hopefully bound from above some $\cl_k(\F)$. If we want to end up with a fractional Helly theorem rather than just a bounded $k$-clique number, we need the parameter $m$ 
to be greater than $\h(\F)$; in particular, we need $\h(\F)$ to be finite.

\paragraph{Bounding the Helly number.} We can deduce from Proposition~\ref{prop:growing-helly-implies-helly-bounded} and Inequation~\eqref{eq:bound-phi-bound-radon-patakova} that bounding the growth on the homological shatter function may allow to bound the Helly number of the family.  More precisely, we have 
\begin{align}
    \forall b'\geq b_0-1, \, \phi_{\F}^{\lceil(\frac{d}{2})\rceil}(r(b'+1,d))\leq b' \quad \Rightarrow \quad \h(\F)< r(b_0,d). \label{eq:psi-growth-bounded}
\end{align}
Indeed, the left hand side implies $\h^{(t)}(\F)<t$ for all $t\geq r(b_0,d)$, which implies the right hand side.

\paragraph{Bounding the $\h(\F)$-th clique number.} Let us fix a value $b_0$ such that Inequation~\eqref{eq:psi-growth-bounded} is satisfied. It remains to find the right value for $t=km_0$ to apply Lemma~\ref{lemma:graded-colorful-implies-clique}. As previously mentioned, we want to set $k\geq \h(\F)$, say $k=r(b_0,d)$. We now need a value $m_0$ such that $\ch^{(m_0r(b_0,d))}(\F)\leq m_0$. After applying Inequation~\eqref{eq:colorful-helly-bounded-by-phi}, it comes down to looking for a value $m_0$ such that 
\begin{align}
    m
    \pth{
        r\pth{
            \phi_{\mathcal{F}}^{(\lceil\frac{d}{2})\rceil}(m_0r(b_0,d))
            ,d }
        } 
    \leq m_0.
    \label{eq:value-m}
\end{align}

\paragraph{Definition of the function $\Psi_{d,b_0}$.} To summarize these two conditions, we introduce a non-stationary function $\Psi_{d,b_0}:\N\rightarrow \N$ that satisfies Inequation~\eqref{eq:psi-growth-bounded} and $\Psi_{d,b_0}(m(r(b_0,d))r(b_0,d))\leq b_0$. 
We can check that any function bounded from above by such a $\Psi_{d,b_0}$  satisfies Inequations \eqref{eq:psi-growth-bounded} and \eqref{eq:value-m} (since the functions $m$ and $r$ are non-decreasing). 

To make such a function explicit, we inverse $R_d:b'\mapsto r(b'+1,d)$ by defining 
$S:\mathbb{N}\rightarrow\mathbb{N}$ as $S(x)=\max\{b'\in\N\mid x\geq r(b'+1,d)\}$. 
Next, we define 
$\Psi_{d,b}: \mathbb{N}\rightarrow\mathbb{N}$ as follows:
\[\Psi_{d,b}(t)= \begin{cases}
    b-1 & \text{if }~t\leq r(b,d) \\
    b & \text{if }~ r(b,d)\leq t\leq m(r(b,d))r(b,d) \\
    S(t) & \text{if }~ t>m(r(b,d))r(b,d)
\end{cases}.\]
Note that the function $S$ is defined as the inverse of the rapidly growing function $R_d$, which seems to be (very roughly) bounded from above by a function in $\mathcal{E}^5$ in the Grzegorczyk hierarchy. Consequently, the best growth we can currently hope for $S$, and thus for $\Psi_{d,b}$, is only marginal: $S$ is somewhere in-between the inverse Ackermann ($\alpha$) function and the iterated logarithm ($log^*$).

\begin{Th}
    \label{th:main-theorem}
    For every integers $b,d\geq 0$, for every $\alpha\in (0,1)$, there exists $\beta = \beta(b,d,\alpha) >0$ and $n_0=n_0(b,d)$ with the following property: for every (possibly infinite) family $\mathcal{F}$ of sets in $\mathbb{R}^d$ with $\phi_\mathcal{F}^{(\lceil \frac{d}{2} \rceil)}$ bounded from above by $\Psi_{d,b}$, for any finite  
    subfamily $\mathcal{F}'\subset \mathcal{F}$ of size $|\F'|\geq n_0$
    with at least 
    $\alpha \binom{|\F'|}{d+1}$ 
    of its $d+1$-tuples 
    with non-empty intersection, 
    there exist some $\beta |\mathcal{F}'|$ members of $\mathcal{F}'$ that intersect.
\end{Th}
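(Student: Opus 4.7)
The plan is to verify the hypotheses of Lemma~\ref{lemma:graded-colorful-implies-clique} for carefully chosen parameters and convert the given $(d+1)$-tuple density into an analogous $m$-tuple density. I set $k := r(b,d)$ and $m := m(r(b,d))$; these are the target bounds for $\h(\F)$ and $\ch^{(mk)}(\F)$ respectively.

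The Helly bound $\h(\F) \leq k$ follows from the growth property of $\Psi_{d,b}$: for $t$ large enough, $\Psi_{d,b}(t) = S(t)$, and the definition of $S$ as the inverse of $R_d$ gives $r(\Psi_{d,b}(t), d) < t$. Combined with Inequation~\eqref{eq:bound-phi-bound-radon-patakova}, this yields $\h^{(t)}(\F) < t$ for all $t$ past some threshold, and Proposition~\ref{prop:growing-helly-implies-helly-bounded} then gives $\h(\F) \leq k$. The graded colorful Helly bound $\ch^{(mk)}(\F) \leq m$ follows similarly from Inequation~\eqref{eq:colorful-helly-bounded-by-phi} together with the identity $\Psi_{d,b}(mk) \leq b$ built into the construction of $\Psi_{d,b}$.

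The crux is to convert the $\alpha$-fraction of intersecting $(d+1)$-tuples supplied by the hypothesis into an $\alpha'$-fraction of intersecting $m$-tuples, so that Lemma~\ref{lemma:graded-colorful-implies-clique} can be applied. My approach is to exploit Matou\v{s}ek's topological fractional Helly theorem from \cite{hellyBetti} on bounded-size subfamilies: every subfamily of size at most $m$ satisfies $\phi_\F^{(\lceil d/2 \rceil)} \leq \Psi_{d,b}(m) \leq b$, so Matou\v{s}ek's Betti hypothesis is met there. A sampling and averaging argument over random $m$-subfamilies of $\F'$ should then produce the desired density $\alpha'(\alpha, b, d, m) > 0$, provided $|\F'|$ exceeds a threshold $n_0(b,d)$. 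This bridging step is the main obstacle: Matou\v{s}ek's theorem cannot be invoked globally on $\F'$ (since $\Psi_{d,b}(t)$ grows with $t$), and the local-to-global transition from ``a fraction of $(d+1)$-subtuples of $\mathcal{G}$ intersect'' to ``$\mathcal{G}$ itself intersects'' must be handled carefully to control the compounding quantitative losses, and in particular to turn Matou\v{s}ek's ``positive fraction of sets share a point'' output into the stronger statement that the whole $m$-subfamily intersects (for which the Helly bound $\h(\F) \leq k$ from the first step is again used).

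Once such an $m$-tuple density $\alpha'$ is established, Lemma~\ref{lemma:graded-colorful-implies-clique} produces a subfamily $\mathcal{G} \subset \F'$ of size $\beta(\alpha', k, m) |\F'|$ in which every $m$ sets intersect. Since $\h(\F) \leq k \leq m$, the entire $\mathcal{G}$ shares a common point, which yields the desired $\beta |\F'|$-size intersecting subfamily and completes the proof.
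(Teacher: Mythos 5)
Your setup mirrors the paper's: same choice of $k = r(b,d)$ and $m = m(r(b,d))$, same derivation of $\h(\F) \le k$ from Proposition~\ref{prop:growing-helly-implies-helly-bounded} and Inequality~\eqref{eq:psi-growth-bounded}, same derivation of $\ch^{(mk)}(\F) \le m$ from Inequality~\eqref{eq:colorful-helly-bounded-by-phi}, and the same closing move (apply Lemma~\ref{lemma:graded-colorful-implies-clique}, then upgrade to full intersection of $\mathcal{G}$ via the Helly bound). The divergence is exactly where you flag it yourself: how to convert the hypothesized $\alpha$-fraction of intersecting $(d+1)$-tuples into an $\alpha'$-fraction of intersecting $m$-tuples. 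The paper does this in one line by invoking \cite[Theorem~1.2]{steppingUp}, which is a ``stepping-up'' fractional Helly theorem built precisely for this purpose: under a bound $\phi_{\F}^{(\lceil d/2\rceil)}(m) \le b$ on the Betti numbers of up to $m$-fold intersections, a constant fraction of intersecting $(d+1)$-tuples forces a constant fraction of intersecting $m$-tuples. That theorem also supplies the threshold $n_0(b,d)$.

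Your proposed replacement --- running Matou\v{s}ek's topological fractional Helly from \cite{hellyBetti} on random $m$-subfamilies and averaging --- does not close the gap as sketched, and you are right to be worried about it. Two concrete problems. First, Matou\v{s}ek's theorem, applied to an $m$-element subfamily $\mathcal{G}$ in which a fraction of $(d+1)$-tuples intersect, outputs only that a \emph{positive fraction} of the sets in $\mathcal{G}$ share a point; it does not output that every $k$-tuple of $\mathcal{G}$ intersects, which is what you would need before the bound $\h(\F)\le k$ can be used to conclude that all of $\mathcal{G}$ intersects. A positive fraction of $\mathcal{G}$ sharing a point is strictly weaker than $\mathcal{G}$ being a $k$-clique, so the upgrade step you describe does not follow. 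Second, for a random $m$-subfamily of $\F'$ the density of intersecting $(d+1)$-tuples is $\alpha$ only in expectation; many sampled $m$-subfamilies will have density far below the threshold required by Matou\v{s}ek's theorem, so the averaging has to be set up much more carefully than ``apply the theorem to each sample.'' Both difficulties are exactly the content of the stepping-up theorem the paper cites, and reproving it here would be a substantial additional piece of work that your sketch does not supply. Everything else in your argument is sound, so the fix is simply to replace your bridging sketch with the citation to \cite[Theorem~1.2]{steppingUp}, checking the hypothesis $\phi_{\F}^{(\lceil d/2 \rceil)}(m) \le \Psi_{d,b}(m) \le b$ as you already do.
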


\begin{proof}
    Set $r_0=r(b_0,d)$ and $m_0=m(r)$. Let $n_0=n_0(b,d)$ be the "sufficiently large" threshold given in \cite[Theorem~1.2]{steppingUp}. 
    Let $\mathcal{F}'$ be a finite subfamily of $\mathcal{F}$ of size $|\F|\geq n_0$, such that at least 
    $\alpha \binom{|\F'|}{d+1}$ 
    of the $(d+1)$-tuples of $\mathcal{F}'$ intersect. Given that $\phi_{\mathcal{F}}^{(\lceil\frac{d}{2})\rceil}(m_0)\leq \Psi_{d,b_0}(m_0)\leq b_0$, \cite[Theorem~1.2]{steppingUp} guarantees that a proportion $\alpha'$ of the $m_0$-tuples of $\mathcal{F}'$ also intersect. 
    Since the function $\phi_{\F}^{(\lceil \frac{d}{2} \rceil)}$ satisfies Inequality~\eqref{eq:value-m}, it means that $\ch^{(m_0r_0)}(\F)\leq m_0$. We now meet the necessary conditions to apply Theorem~\ref{lemma:graded-colorful-implies-clique} by setting $k=r_0$: it yields a subfamily $\mathcal{G}$ of size $\beta(\alpha',r_0,m)|\mathcal{F}'|$, where all $r_0$ sets of $\mathcal{G}$ intersect. 
    In fact, it means that all the elements of $\mathcal{G}$ intersect, because the function $\phi_{\F}^{(\lceil \frac{d}{2} \rceil)}$ satisfies Inequality~\eqref{eq:psi-growth-bounded}, ensuring $\h(\F)\leq r_0$.
    \end{proof}

Note that even if the topology of the ground set (in our case, $\mathbb{R}^d$) sets the parameter $d$, the parameter $b$ may vary. For a given $\phi_{\mathcal{F}}^{(\lceil \frac{d}{2} \rceil)}$, we only need to find one value of $b$ such that $\phi_{\mathcal{F}}^{(\lceil \frac{d}{2} \rceil)}\leq \psi_{d,b}$ to apply the theorem. This condition becomes less restrictive when $b$ gets bigger, but the subfamily $\F'$ then has to be of larger size for the theorem to apply.

\section{Some examples}
\label{section:examples-betti-borne}
In this section we characterize homological shatter functions, and give an idea of how to characterize the growth of the graded Radon numbers.

\paragraph{Families with a given homological shatter function.}
The homological shatter function is non-decreasing. It turns out that all non-decreasing functions are the homological shatter function of some family.
\begin{Lemme}
    \label{lemma:construction-of-family-with-given-phi}
    For any non-decreasing function $f:\N\rightarrow \N$, for every $h\geq 0$ and $d\geq h+2$, there exists a family of sets $\F$ in $\R^d$ such that $\phi_{\F}^{(h)}=f$.
\end{Lemme}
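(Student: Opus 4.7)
The plan is to build $\F$ as a disjoint union $\F = \bigsqcup_{k \geq 1} \F_k$ of layer-gadgets. For each $k\geq 1$, I would produce a gadget $\F_k$ consisting of $k$ sets confined to a bounded region $R_k \subset \R^d$, arranging the $R_k$'s to be pairwise disjoint and each $F\in \F_k$ to lie genuinely inside $R_k$. This makes the intersection of any two sets coming from different layers empty, contributing nothing to the reduced Betti numbers in any non-negative degree.

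The design constraints on each gadget $\F_k$ are: (i) the full intersection $\bigcap \F_k$ has $\tilde\beta_h = f(k)$ and $\tilde\beta_i = 0$ for $0\leq i < h$, e.g.\ it is homotopy equivalent to $\bigvee^{f(k)} S^h$; (ii) every proper sub-intersection has trivial reduced homology in degrees $0,1,\dots,h$ (it is either empty or contractible in these degrees). Granting (i) and (ii), the shatter function is straightforward to read off: given $\mathcal{G} \subseteq \F$ with $|\mathcal{G}|\leq t$, either $\mathcal{G}$ meets two distinct layers (so $\bigcap \mathcal{G} = \emptyset$), or $\mathcal{G} \subseteq \F_k$ for some $k$; in the second case all $\tilde\beta_i(\bigcap \mathcal{G})$ for $i\leq h$ vanish when $\mathcal{G} \subsetneq \F_k$, and in degree $h$ they equal $f(k) \leq f(t)$ when $\mathcal{G} = \F_k$ (which forces $k\leq t$). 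Taking the supremum gives $\phi_\F^{(h)}(t) = \max_{k\leq t} f(k) = f(t)$ by monotonicity of $f$, with equality realised by $\mathcal{G} = \F_t$.

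The main obstacle is the gadget construction itself. For $k=1$ one simply places $F_{1,1} = \bigvee^{f(1)} S^h$ inside $R_1$, which embeds since $d \geq h+1$. For $k=2$ the standard hemisphere construction works: embed a wedge of $f(2)$ copies of $S^{h+1}$ inside $R_2$ (possible since $d \geq h+2$) and let $F_{2,1}, F_{2,2}$ be its closed "upper" and "lower" hemispheres; both are contractible and they meet exactly in the wedge of equators $\simeq \bigvee^{f(2)} S^h$. For $k\geq 3$ the construction is more delicate: one needs $k$ subsets of $R_k$ whose pairwise, triplewise, \dots, $(k-1)$-wise intersections are trivially-topologised up to degree $h$, while their common intersection has $\tilde\beta_h = f(k)$. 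I would attempt this by an inductive suspension/join-type construction---covering a carefully chosen higher-dimensional topological object inside $R_k$ by $k$ generalised hemispheres so that erasing any one of them collapses the central cycle---using the dimensional freedom $d \geq h+2$ to realise the construction concretely via tubular neighbourhoods. Simultaneously ensuring (i) and (ii) for every $k\geq 3$ is the technical heart of the argument.
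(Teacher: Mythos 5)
Your overall architecture matches the paper's: disjoint regions, one gadget $\F_k$ per size $k$, full intersection of the gadget carrying $\tilde\beta_h=f(k)$, all proper sub-intersections homologically trivial in degrees $0,\dots,h$, and then the bookkeeping that reads off $\phi^{(h)}_\F=f$ from monotonicity of $f$. That bookkeeping part is fine. The problem is that the gadget construction for general $k$ --- which you yourself flag as ``the technical heart of the argument'' --- is exactly the content of the lemma, and you do not supply it beyond $k=2$. Worse, the route you hint at (generalized hemispheres, suspension/join iterations) runs into a genuine dimensional wall: if all proper sub-intersections of the $k$ pieces were contractible and the full intersection were $\bigvee^{f(k)}S^h$, the generalized Mayer--Vietoris (nerve) argument forces the union of the pieces to have nontrivial homology in degree $h+k-1$, which for reasonable subsets of $\R^d$ requires $h+k-1\leq d-1$; with $d=h+2$ this caps you at $k\leq 2$. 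So any successful construction for $k\geq 3$ must exploit the relaxation you state in (ii) --- triviality only up to degree $h$, with arbitrary homology allowed above --- and your sketch gives no concrete mechanism that does so.

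The paper's construction does precisely this, uniformly in $k$, via complements and Alexander duality: inside the $k$-th box place $f(k)$ disjoint low-dimensional polytopes, each with $k$ facets labeled $1,\dots,k$, and let the $j$-th set of $\F_k$ be the box minus all facets labeled $j$. The intersection of all $k$ sets is the box minus the full polytope boundaries, and Alexander duality turns the $f(k)$ boundary spheres into $f(k)$ independent $h$-cycles of the complement; the intersection of any proper subfamily is the complement of a proper subcomplex of those boundaries, which (having dimension $d-h-1$ at most and no top-dimensional cycle) has vanishing reduced homology in all degrees $\leq h$, though possibly not above --- which is harmless since $\phi^{(h)}$ only sees degrees up to $h$. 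This is the missing idea: realize the sets as complements of labeled pieces of a single codimension-$(h+1)$-ish object, rather than as contractible ``hemisphere'' covers, so that the same recipe works for every $k$ with only $d\geq h+2$ of room.
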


\begin{proof}
    Consider countably many disjoint filled boxes 
    in $\R^d$. For $i\geq d-h$, we define a family $\F_i\coloneqq\{F_1^{(i)},\dots,F_{i}^{(i)}\}$ as follows (for $i<d-h$, we can adapt the construction): 
    place $f(i)$ disjoint $(d-h-1)$-dimensional polytopes with $i$ facets inside the $i$-th rectangle, and label the facets of each polytope from $1$ to $i$. 
    Define the set $F_k^{(i)}$ as the complement of all the facets labeled $k$ within the filled rectangle. The intersection between all members of $\mathcal{F}_i$ is the complement of the boundaries of these polytopes inside the filled rectangle. Using Alexander's duality, we know that the $h$-th Betti number of this intersection is $\tilde{\beta}_{h}(\bigcap_{F\in\mathcal{F}_i} F,\Z_2)=f(i)$. 
Moreover, intersections between fewer sets, say $\mathcal{F}_i'$, satisfy $\tilde{\beta}_{h'}(\bigcap_{F\in\mathcal{F}_i'} F,\Z_2)=0$ for $h'<h$. 
Thus, the family $\mathcal{F}\coloneqq\bigcup_{i\in\N} \mathcal{F}_i$ has its homological shatter function $\phi^{(h)}_{\mathcal{F}}$ equal to the function $f$.
\end{proof}

For every $d>3$, the integer $h=\lceil\frac{d}{2}\rceil$ satisfies $d\geq h+2$: the family obtained from Lemma~\ref{lemma:construction-of-family-with-given-phi} is not suited to apply \cite[Corollary 1.3]{steppingUp} but suited to apply Theorem~\ref{th:main-theorem}. Naturally, we do not need to apply Theorem~\ref{th:main-theorem} to get a fractional Helly number: the nerve of our family is a disjoint union of cliques, thus the usual fractional Helly theorem can be applied. On another note, notice that we do not control at all the ($h'$th) homological shatter functions for $h'>h$. 

\paragraph{Growth of the Radon numbers.} 

Similarly to the graded Helly numbers, it would be useful to characterize the indices $t$ such that $\rad^{(t)}(\F)>\rad^{(t-1)}(\F)$. A first step would be to understand the pairs $(\F,S)$ where $\F$ is a family of sets $\F$ and $S$ is a point set of the ground set such that $S$ is \textit{minimally non-partitionable for $\F$}, that is:
\begin{itemize}
    \item[--] $S$ admits no $ \F$-Radon partition, but
    \item[--] For every $F\in\F$, $S$ admits an $(\F\backslash \{F\})$-Radon partition.
\end{itemize}

\begin{Lemme}
    For every $k\geq 2$, there exist a family $\F$ of size $2^{k-1}-1$ and a set $S$ of size $k$ such that $S$ is minimally non-partitionable for $\F$.
\end{Lemme}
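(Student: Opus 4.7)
The plan is to build $S$ and $\F$ explicitly. I would take $S=\{p_1,\dots,p_k\}$ and, in bijection with the $2^{k-1}-1$ non-trivial unordered bipartitions $\pi=\{A_\pi,B_\pi\}$ of $S$, attach one set $F_\pi$ to each; this already ensures $|\F|=2^{k-1}-1$. The trace of $F_\pi$ on $S$ will be one of the two parts of $\pi$ (chosen by a fixed rule, e.g.\ the side containing $p_1$, with a tie-breaking convention for balanced partitions), and the ground set $X$ will be enlarged beyond $S$ by auxiliary points engineered below.

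The first step is to verify non-partitionability: for every bipartition $(C,D)$ of $S$, show that $\conv_\F(C)\cap\conv_\F(D)=\emptyset$. The $k$ ``singleton witnesses'' $F_{\pi_d}$ attached to $\pi_d=(\{p_d\},S\setminus\{p_d\})$, with trace $S\setminus\{p_d\}$, will do most of the work: every $p_e\in D$ is excluded from $\conv_\F(C)$ because $F_{\pi_e}\supseteq C$ misses $p_e$, and symmetrically for $p_e\in C$ and $\conv_\F(D)$. A separate check, using the non-singleton $F_\pi$ associated to the partition $(C,D)$ itself, will rule out the auxiliary points of $X\setminus S$ as common witnesses of the two hulls.

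The second step is minimality. For a singleton witness $F_{\pi_d}$, it is the unique element of $\F$ containing the whole of $S\setminus\{p_d\}$, so removing it makes $\conv_{\F\setminus\{F_{\pi_d}\}}(S\setminus\{p_d\})$ equal $X$, and the partition $(\{p_d\},S\setminus\{p_d\})$ becomes Radon through $p_d$ itself. For a non-singleton $F_\pi$, I would introduce a companion auxiliary point $y_\pi\in X\setminus S$ lying in a set $F\in\F$ precisely when $F\in U_\pi\setminus\{F_\pi\}$, where $U_\pi=\{F\in\F:A_\pi\subseteq F\ \text{or}\ B_\pi\subseteq F\}$. By design, $y_\pi$ witnesses the Radon-ness of $\pi$ in $\F\setminus\{F_\pi\}$, and $F_\pi\in U_\pi$ prevents $y_\pi$ from witnessing Radon-ness of $\pi$ itself in the full family $\F$.

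The hard part will be to check that the auxiliary point $y_\pi$ does not accidentally witness Radon-ness of some other bipartition $\pi'\neq\pi$, which would spoil non-partitionability; this pathological situation occurs exactly when $U_{\pi'}\subseteq U_\pi\setminus\{F_\pi\}$. The combinatorial crux of the proof is therefore to orient each bipartition (that is, to choose which side to take as the trace of $F_\pi$) in such a way that no two sets $U_\pi$ nest inside one another in this forbidden way. I expect this to reduce to a short case analysis on the sizes of $|A_\pi|$ and $|B_\pi|$, with singleton partitions and balanced partitions being the two boundary cases requiring care.
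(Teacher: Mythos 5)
Your construction is genuinely different from the paper's, and its skeleton is sound: one set $F_\pi$ per bipartition with trace one side of $\pi$, singleton partitions oriented to the big side, auxiliary witnesses $y_\pi$ for minimality, and the correct observation that $y_\pi$ ruins non-partitionability of $\pi'$ exactly when $U_{\pi'}\subseteq U_\pi\setminus\{F_\pi\}$. But the proof is incomplete precisely at the point you flag: ``I expect this to reduce to a short case analysis'' is not an argument, and in this approach the existence of an orientation with no forbidden nesting \emph{is} the lemma. (Two smaller issues: your sample rule ``the side containing $p_1$'' contradicts the trace $S\setminus\{p_1\}$ that your own Step~1 requires for the singleton partition at $p_1$, so the orientation rule must be stated consistently; and you should say explicitly that the rule defining $y_\pi$ is not circular, which holds because the sets $U_\pi$ depend only on traces on $S$.) The gap is fillable: orient every bipartition having a side of size at most $2$ toward its large side. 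Given a non-singleton $\pi=\{A,B\}$ and any $\pi'=\{A',B'\}\neq\pi$, some side $C'$ of $\pi'$ contains neither $A$ nor $B$ (otherwise disjointness of $A',B'$ forces $\pi'=\pi$); choose $a\in A\setminus C'$, $b\in B\setminus C'$ and let $\sigma$ be the bipartition $\{a,b\}$ versus $S\setminus\{a,b\}$. Then $F_\sigma\in U_{\pi'}$ (its trace $S\setminus\{a,b\}$ contains $C'$) but $F_\sigma\notin U_\pi$ (it misses $a\in A$ and $b\in B$), so the nesting never occurs. This works verbatim for $k\geq 5$; $k=3$ is vacuous (no non-singleton partitions, hence no auxiliary points), $k=4$ needs a finite check (traces $\{1,2\},\{1,3\},\{1,4\}$ work), and $k=2$ is unattainable by \emph{any} family of size one, since the hull of each point always contains that point and the two hulls always meet, so your construction (and the stated cardinality itself) only starts at $k\geq 3$.

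For comparison, the paper avoids all of this bookkeeping by changing the ground set: it takes $X=\{0,1\}^{2^{k-1}-1}$ and, for each bipartition $\mathcal{P}^{(i)}$ of $\{1,\dots,k\}$, the \emph{pair} of coordinate sets $F_i^0,F_i^1$; the point $p_j$ is the word recording on which side of each bipartition $j$ lies. Hulls are then subcubes, disjointness of two hulls is read off a single coordinate, non-partitionability of $S$ is immediate because the only coordinate separating a given bipartition is its own index, and minimality follows since deleting $F_\ell^\delta$ destroys that unique separating coordinate — no auxiliary points, no orientation to choose. The price is that this family has one pair of sets per bipartition, i.e.\ $2(2^{k-1}-1)$ members, whereas your single-set-per-bipartition design, once the orientation argument above is actually carried out, matches the cardinality $2^{k-1}-1$ announced in the statement. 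So your route is a legitimate alternative with a genuinely different construction, but as submitted it asserts rather than proves its combinatorial core.
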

\begin{proof}
    \label{ex:growing-radon-number}
    Set $X=\{0,1\}^{2^{k-1}-1}$ the set of binary words of length $2^{k-1}-1$ and consider the family $\F = \{F_i^{\delta}\mid i\in\mathbb{N},\delta\in\{0,1\}\}$, where $F_i^{\delta}$ is the set of words whose $i$-th letter is $\delta$. 

  An $\mathcal{F}$-convex set is now a cartesian product of occurrences of $\{0\},\{1\}$ or $\{0,1\}$. (Note that $\mathcal{F}$ has Helly number $2$.) If $S$ is a set of words, consider the indices where all the words share the same letter: the $\mathcal{F}$-hull of $S$ is the set of words that match these letters at the corresponding indices. Notice that two subsets ${\cal A}, {\cal B} \subset X$ have disjoint $\F$-hulls if and only if there exists an index $i$ such that all words of ${\cal A}$ have their $i$-th letter equal to $\delta$ and all words of ${\cal B}$ have their $i$-th letter equal to $1-\delta$.
    When we consider the $\F'$-hull for a subfamily $\mathcal{F}'\subseteq \F$, the same holds considering only the indices $i\in\mathbb{N}$ such that $\{F_i^0,F_i^1\} \subset\mathcal{F}'$.

    Denote all the $2$-partitions of $\{1,\dots,k\}$ by $\mathcal{P}^{(1)},\dots,\mathcal{P}^{(2^{k-1}-1)}$. For $j\in\{1,\dots,k\}$, we pick $p_j$ to be the word whose $i$-th letter is $0$ or $1$ depending on whether $j\in\mathcal{P}^{(i)}_0$ or $j\in\mathcal{P}^{(i)}_1$. We consider the set of words $S = \{p_1,\dots,p_k\}$. This family $\F$ and this set $S$ are such that:
    \begin{itemize}
        \item[--]  $S$ has no $\F$-Radon partition. Pick any 2-partition of $S$, say $(\{p_i\in S\mid i\in \mathcal{P}^{(\ell)}_0\},\{p_i\in S\mid i\in \mathcal{P}^{(\ell)}_1\})$. For every $j\in \mathcal{P}^{(\ell)}_1$, the $\ell$-th letter of the word $p_j$ is $1$, and for every $j\in \mathcal{P}^{(\ell)}_0$, the $\ell$-th letter of the word $p_j$ is $0$: the $\F$-convex hulls do not intersect, thus the partition $\mathcal{P}^{(s)}$ does not form an $\F$-Radon partition.
\item[--] For every $F_{\ell}^{\delta}\in\F$, the partition $(\{p_i\in S\mid i\in \mathcal{P}^{(\ell)}_0\},\{p_i\in S\mid i\in \mathcal{P}^{(\ell)}_1\})$ is an $(\F\backslash\{F_{\ell}^{\delta}\})$-Radon partition for $S$.\qedhere
\end{itemize}
\end{proof}
By considering the set of infinite binary words and the corresponding family $\{F_i^{\delta}\}$, we get a family whose graded Radon numbers grow logarithmically (the exact graded Radon numbers can be computed following a reasoning similar to Proposition~\ref{prop:graded-radon-bounded}). In particular, its Radon number cannot be bounded.

\bibliographystyle{plain}

\end{document}